\documentclass[conference,letterpaper]{IEEEtran}

\hyphenation{evaluate}

\usepackage{amsthm}
\usepackage{amsmath} %
\usepackage{mathtools}
\usepackage{graphicx}
\usepackage{psfrag}
\usepackage{epsf,epsfig}
\usepackage{epstopdf}
\usepackage{subfigure}
\usepackage{cite,color}
\usepackage{cite}
\usepackage{amsmath,amssymb,amsfonts}
\usepackage{algorithmic}
\usepackage{graphicx}
\usepackage{textcomp}
\usepackage[x11names]{xcolor}
\usepackage{verbatim}
\usepackage{amsmath}

\definecolor{red}{rgb}{1,0,0}  


\newtheorem{Prop}{Proposition}

\usepackage{graphicx}
\usepackage{tikz}
\usepackage{caption}

\definecolor{MyRoyalBlue}{RGB}{65,105,225}
\definecolor{MyFireBrick}{RGB}{178,34,34}

\begin{document}

\title{Signal Constellation Construction \\ via Radio Frequency Mirrors }

\vspace{-.3cm}
\author{
\IEEEauthorblockN{Majid Nasiri Khormuji} \\
\vspace{-.5cm}
\IEEEauthorblockA{{Huawei Technologies Sweden AB}\\
                           majid.n.k@ieee.org
                          \vspace{-.6cm}
                           }

}

\maketitle

\begin{abstract}
By integrating feedback with Radio Frequency (RF) mirrors, we develop a closed-loop media-based modulation system for efficient utilization of the signal space. Specifically, this closed-loop construction optimizes the inherited signal constellation from the media, achieving a significantly larger minimum pairwise Euclidean distance than the original configuration. The initial signal constellation, derived from the media, is used to compute a set of complex weights for all activation patterns of the RF mirrors. These complex weights are then fed back to the transmitter to refine the transmit signal before it reaches the mirrors. This feedback mechanism ensures that the received, shaped signal constellation retains improved properties, enabling more reliable transmission. Notably, the closed-loop approach enables the media-based modulation to approach the performance of an AWGN channel, while the channel from each mirror to the single-antenna receiver is modeled as Rayleigh fading.
\end{abstract}


\section{Introduction}\label{sec:intro}

Recent innovations in wireless communication have underscored the promise of media-based approaches for enhancing signal quality and transmission efficiency. In particular, the incorporation of Radio Frequency (RF) mirrors—akin to intelligent surfaces and reconfigurable metasurfaces—has emerged as a promising method for improving the reliability and efficiency of wireless links. These mirrors can be strategically positioned to shape and modify the transmitted signal, enabling a favorable signal construction. Following this direction, RF mirrors positioned near or around transmit antennas have led to the development of Media-Based Modulation (MBM) \cite{MBM_khandani_ISIT, MBM_khandani_ISIT2, MBM_Seifi, Basar}. Through careful configuration of the RF mirrors, transmitted signals are \emph{effectively} tuned and enhanced prior to reaching their destination, significantly improving performance.

This paper considers MBM with \emph{multi-state} RF mirrors mounted near a transmit antenna, as opposed to that in \cite{Majid_BMBM} which was confined to the two-state configuration. It is further assumed that the receiver is equipped with a single antenna. We present several new findings. First, we provide an upper bound on the minimum pairwise Euclidean distance of the MBM signal constellation with an arbitrary number of states for single-input and single-output Rayleigh fading links. This bound demonstrates that the average minimum distance decays exponentially with the spectral efficiency of the MBM constellation (i.e., bits per signal point). By comparing this bound with that of the conventional source-based M-QAM modulation, we conclude that there is a loss compared to conventional transmission. The power loss in symbol error rate is estimated to vary from 1 to 2 dB for different spectral efficiencies. Simulation results corroborate these estimates. As a remedy, to improve MBM transmission, we consider \emph{closed-loop} MBM in which feedback is used to \emph{shape} the signal constellations. A numerical algorithm based on a stochastic perturbation method is proposed to construct closed-loop signal constellations with a significantly enhanced minimum distance profile. The simulation results demonstrate that closed-loop MBM \emph{significantly} outperforms open-loop MBM and \emph{approaches} the performance of an AWGN channel. This work not only advances our understanding of MBM but also offers valuable insights for designing future 6G networks, where enhanced signal reliability and efficiency are crucial.

The remainder of the paper is organized as follows. Section~\ref{sec:open_MBM} outlines the baseline open-loop MBM scheme. Section~\ref{sec:open_MBM_properties} discusses some properties of open-loop MBM, such as the minimum distance of MBM signal constellations. Section~\ref{sec:Close_loop MBM} presents the closed-loop MBM approach. Section~\ref{sec:weight_algo} describes a numerical algorithm to optimize the closed-loop MBM. Subsection~\ref{sec:opt_const} provides representative examples of optimized MBM signal constellations. Section~\ref{sec:performance} presents simulation results. Finally, Section~\ref{sec:concl} concludes the paper.



\section{Baseline: Open-Loop MBM}\label{sec:open_MBM}

\begin{figure}[t]
\centering
\vspace{-0.01cm}
\includegraphics[trim=-.75cm 0cm 0cm 0cm, width=.39\textwidth]{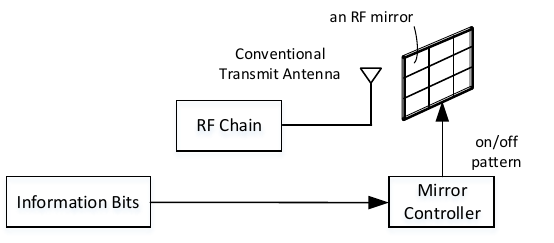}  
\vspace{-0.1cm}
\caption{Baseline  MBM transmitter with RF mirror controller.}
\label{Fig:signaling}
\vspace{-0.3cm}
\end{figure}

Fig.~\ref{Fig:signaling} shows a  block diagram of the transmitter of the open-loop MBM. A single antenna, connected to an RF chain, emits a radio wave at a given frequency. The transmit antenna is surrounded by a set of RF mirrors. The emitted signal will \textquotedblleft pass\textquotedblright~through the mirrors before departing toward its destination. In this figure, nine square-shaped mirrors are shown mounted in front of the transmit antenna. The information bits are passed to a mirror controller, which generates a signal based on the combination of input bits to activate a mirror pattern corresponding to the given information bits.

In \cite{MBM_Seifi}, an MBM prototype is reported in which each pair of adjacent patches can be connected or disconnected using a PIN (Positive-Intrinsic-Negative) diode, i.e. RF switch. 
 An RF mirror will be transparent to the incident wave if its diodes are open, or will reflect the incident wave if its diodes are shorted. Thus, the incoming combination of bits is mapped to the activation patterns of the RF mirrors. 
Consequently, \emph{activation patterns}, each representing a distinct transmit \emph{state}, are formed by combining switched-on and off PIN diodes.

The transmitter can therefore create a set of channel states denoted as $\{s_1, s_2, \ldots, s_{2^k}\}$. For each state, the transmitter creates a channel realization that is mapped to one of $2^k$ binary information strings of length $k$. That is, for each \emph{configured} state, the channel realization \emph{will} change. Let the channel for state $s_i$ be the complex number denoted as  $h_i$. Therefore, the signal constellation points of the MBM are given by the set
\begin{eqnarray}\label{eq:MBM_const}
\mathbb{S} := \Big\{h_1, h_2, \ldots, h_{2^k}\Big\}.
\end{eqnarray}

Trial measurements, such as those reported in \cite{MBM_Seifi}, indicate that the elements of the signal points in $\mathbb{S}$ can be modeled by an i.i.d. Rayleigh fading distribution. This assumption is also considered in \cite{Basar}. We therefore assume throughout that $h_i \sim \mathcal{CN}(0,1)$, i.e., zero-mean unit variance complex Gaussian random variables. Nevertheless, the developed solution is not limited to this distribution of the channel states.

\section{Minimum Distance Properties \\ of Open-Loop MBM}\label{sec:open_MBM_properties}

The minimum pairwise Euclidean distance among the constellation signal points is an important feature of signal constellations in general. In this section, we discuss the minimum distance properties of open-loop MBM constellations. In particular, we show how the minimum distance of the MBM constellations scales as the constellations become denser. We analytically demonstrate that the minimum distance of the open-loop MBM constellation \emph{decreases} at least exponentially with its spectral efficiency.

Let for state $i$, the channel coefficient be $h_i := x_i + \sqrt{-1} y_i$, where $x_i$ and $y_i$ are mutually independent zero-mean complex variables with variance half. Therefore, for two arbitrary points in an MBM constellation, the difference between the constellation points is given by
\begin{eqnarray}\label{eq:BMBM_distance}
    \Delta_{i,j} &=& h_i - h_j \nonumber\\
    &=& (x_i - x_j) + \sqrt{-1} (y_i - y_j) \nonumber\\
    &=:& a_{ij} + \sqrt{-1}b_{ij},
\end{eqnarray}
where $a_{ij}$ and $b_{ij}$ are independent zero-mean Gaussian variables with variance one since $x_i, x_j, y_i$, and $y_j$ are mutually independent with zero mean and variance half. Thus, we have
\begin{eqnarray}\label{eq:BMBM_distance2}
    \Delta_{i,j} \sim \mathcal{CN}(0,2).
\end{eqnarray}

Therefore, the corresponding distance $d_{i,j} := |\Delta_{i,j}|$ is Rayleigh distributed with mean $\sqrt{0.5\pi} \approx 1.25$. The pdf of $d_{i,j}$ is hence given by
\begin{eqnarray}\label{pdf_distance}
    f_{d_{i,j}}(d) = d e^{-\frac{d^2}{2}}.
\end{eqnarray}
That is, an MBM signal constellation with small $d$ is likely if the constellation is drawn randomly.

This situation becomes even more severe if we consider higher-order MBM. To illustrate this, consider a transmitter with $2^k$ states. That is, there are $2^k$ signal points such that each signal point carries $k$ bits. Therefore, the spectral efficiency of the constellation is $k$ bits per signal point. The following proposition sheds some light on this aspect.

\begin{Prop}
For the open-loop M-MBM configured with $M = 2^k$, the average minimum pairwise distance of the signal constellation is bounded as
\begin{eqnarray}\label{eq:MBM_distance_mean}
    \mathbb{E}[d_{min}] &\leq& \sqrt{\pi} 2^{-\frac{k}{2}}.
\end{eqnarray}
\end{Prop}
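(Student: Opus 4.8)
The plan is to bound the expected minimum of $\binom{M}{2}$ pairwise distances by a union-bound / expectation argument that only uses the marginal Rayleigh law \eqref{pdf_distance} of each $d_{i,j}$, not their joint distribution. The key identity is $\mathbb{E}[d_{\min}] = \int_0^\infty \Pr[d_{\min} > t]\,\mathrm{d}t$, and the obvious estimate $\Pr[d_{\min} > t] = \Pr[\text{all } d_{i,j} > t]$. Since the events $\{d_{i,j} > t\}$ are positively correlated in a way that is hard to control, the clean move is to go the other direction: bound $\Pr[d_{\min} \le t]$ from below is not what we want; instead I would simply observe that for \emph{any} fixed pair, $d_{\min} \le d_{i,j}$ pointwise, which gives $\mathbb{E}[d_{\min}] \le \mathbb{E}[d_{i,j}] = \sqrt{\pi/2}$ — too weak. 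So the real argument must exploit that there are many pairs.

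The workable route: use $\mathbb{E}[d_{\min}] = \int_0^\infty \Pr[d_{\min} > t]\,\mathrm{d}t$ and bound $\Pr[d_{\min} > t]$ by noting it is at most $\Pr[d_{1,2} > t \text{ and } d_{3,4} > t \text{ and } \ldots]$ over a set of \emph{disjoint} index pairs — on disjoint pairs the $\Delta$'s involve distinct $h_i$'s and hence the $d_{i,j}$'s are independent. With $M = 2^k$ states we can form $M/2 = 2^{k-1}$ disjoint pairs, so $\Pr[d_{\min} > t] \le \big(\Pr[d_{1,2} > t]\big)^{2^{k-1}} = e^{-2^{k-1} t^2/2} = e^{-2^{k-2} t^2}$, using $\Pr[d_{i,j} > t] = e^{-t^2/2}$ from \eqref{pdf_distance}. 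Then
\begin{eqnarray}
\mathbb{E}[d_{\min}] \;\le\; \int_0^\infty e^{-2^{k-2} t^2}\,\mathrm{d}t \;=\; \frac{1}{2}\sqrt{\frac{\pi}{2^{k-2}}} \;=\; \sqrt{\pi}\,2^{-k/2},
\end{eqnarray}
which is exactly \eqref{eq:MBM_distance_mean}. (One checks $\tfrac12\sqrt{\pi/2^{k-2}} = \tfrac12 \cdot \sqrt{\pi}\cdot 2^{1-k/2} = \sqrt{\pi}\,2^{-k/2}$.)

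The main obstacle is the independence bookkeeping: I must be careful that $d_{1,2}, d_{3,4}, \ldots, d_{M-1,M}$ are genuinely mutually independent, which holds because each uses a disjoint quadruple of the independent real coordinates $\{x_i, y_i\}$, and then that $d_{\min}$ (the minimum over \emph{all} $\binom{M}{2}$ pairs) is $\le$ the minimum over this sub-collection, so $\{d_{\min} > t\} \subseteq \{\min_{\ell} d_{2\ell-1,2\ell} > t\}$ and probabilities go the right way. Everything else — the Gaussian integral and the Rayleigh tail $\Pr[d > t] = \int_t^\infty s e^{-s^2/2}\mathrm{d}s = e^{-t^2/2}$ — is routine. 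A remark worth adding is that this shows the decay is \emph{at least} exponential in $k$ and matches the qualitative claim preceding the proposition.
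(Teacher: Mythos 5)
Your proposal is correct and is essentially the paper's own argument: both restrict to the $2^{k-1}$ disjoint index pairs to get mutually independent Rayleigh distances, bound $d_{\min}$ by the minimum over that sub-collection, and arrive at the tail $e^{-2^{k-2}t^2}$ and the value $\sqrt{\pi}\,2^{-k/2}$. The only cosmetic difference is that you integrate the survival function $\Pr[d_{\min}>t]$ directly, whereas the paper differentiates the CDF of the sub-minimum and computes its mean from the pdf — an equivalent calculation.
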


\begin{proof}
The minimum pairwise distance for a given $2^k$-MBM constellation can be bounded as
\begin{eqnarray}\label{eq:MBM_distance}
    d_{min} &=& \min\left \{ d_{m,n} \mid m \neq n, 1 \leq m \leq 2^k, 1 \leq n \leq 2^k \right\} \nonumber \\
    &\leq& \min \left\{ d_{2i-1,2i} \mid 1 \leq i \leq 2^{k-1} \right\} =: d_o,
\end{eqnarray}
where $d_{m,n} := |h_m - h_n|$ and the inequality follows since we only select a subset of total minimum distances. The reason for this is to create statistically independent minimum pairwise distances to be able to derive analytical expressions.

We next compute the cumulative distribution function of $d_o$ by considering the following series of equalities
\begin{eqnarray}\label{eq:MBM_distance_PDF}
    F_{d_{o}}(d) &=& \text{Pr} (d_{o} \leq d) \nonumber \\
    &=& 1 - \text{Pr} (d_{o} > d) \nonumber \\
    &=& 1 - \text{Pr} \left(\min \{ d_{12}, d_{34}, \ldots, d_{2^k-1,2^k} \} > d\right) \nonumber \\
    &\overset{(a)}{=}& 1 - \text{Pr} \left(d_{12} > d, d_{34} > d, \ldots, d_{2^k-1,2^k} > d\right) \nonumber \\
    &=& 1 - \Big[ \text{Pr} (d_{12} > d) \ldots \text{Pr} (d_{2^k-1,2^k} > d) \Big] \nonumber \\
    &\overset{(b)}{=}& 1 - \left( e^{-\frac{d^2}{2}} \right) \ldots \left( e^{-\frac{d^2}{2}} \right) \nonumber \\
    &=& 1 - e^{-2^{k-2}d^2},
\end{eqnarray}
where (a) holds since $\{d_{12}, d_{34}, \ldots, d_{2^k-1,2^k}\}$ are mutually independent and (b) follows by using \eqref{pdf_distance}.
Thus the pdf of $d_o$ is given by
\begin{eqnarray}\label{eq:MBM_distance_PDF}
    f_{d_{o}}(d) &=& \frac{\partial F_{d_{o}}}{\partial d} = 2^{k-1} d e^{-2^{k-2}d^2}.
\end{eqnarray}
Therefore, $d_o$ is also Rayleigh distributed and we can then obtain the mean value as
\begin{eqnarray}\label{eq:MBM_distance_mean}
    \mathbb{E}[d_{o}] &=& \int_{0}^{\infty} t f_{d_{o}}(t) \, dt \nonumber \\
    &=& \int_{0}^{\infty} 2^{k-1} t^2 e^{-2^{k-2}t^2} \, dt = \sqrt{\pi}2^{-\frac{k}{2}},
\end{eqnarray}
where the last equality can be obtained by recalling the integral equality of \cite{Integral_handbook}
\begin{eqnarray}\label{eq:MBM_distance_mean_integral}
    \int_{0}^{\infty} x^n e^{-a x^2} \, dx = \frac{(2m-1)!!}{2^{m+1}a^m} \sqrt{\frac{\pi}{a}},
\end{eqnarray}
which holds for $n = 2m$ and $a > 0$, where $!!$ denotes the double factorial. This completes the proof.
\end{proof}
The result of the above proposition indicates that the average minimum pairwise distance diminishes \emph{exponentially} with increasing spectral efficiency of the signal constellation.

We can now compare the above result with that of M-QAM. The minimum distance for the received M-QAM signal constellation with unit average symbol energy over Rayleigh fading channels is given by
\begin{eqnarray}\label{eq:mQAM_distance}
    d_{min}^{\text{(M-QAM)}} = \mathbb{E}[|h_i|] \sqrt{\frac{6}{M-1}} = \frac{1}{2} \sqrt{\frac{6 \pi}{2^k-1}},
\end{eqnarray}
where we used the scaled minimum distance of the conventional M-QAM in \cite{Ahlin}.


For instance, for 4-MBM, the average minimum distance of the signal constellation is observed to be at most $d_{min}^{\text{(4-MBM)}} \leq \mathbb{E}[d_{o}] = 0.5 \sqrt{\pi} \approx 0.8862$. In contrast, for the 4-QAM, the minimum distance is given by $d_{min}^{\text{(4-QAM)}} = \sqrt{0.5 \pi} \approx 1.2533$.


In general, for all $k\in \mathbb{N}$, we can write
\begin{eqnarray}\label{eq:mQAM_distance_ratio}
   \eta := \frac{d_{min}^{\text{(M-MBM)}}}{d_{min}^{\text{(M-QAM)}}} \leq \frac{2^{-\frac{k}{2}} \sqrt{\pi}}{\frac{\sqrt{\pi}}{2} \sqrt{\frac{6}{2^k-1}}}= \sqrt{\frac{2}{3}} \left(1 - \frac{1}{2^k}\right)^{1/2} \!\! \! \!< 1,
\end{eqnarray}
which is strictly \emph{less} than one. For  $k\gg 1$, we have
\begin{eqnarray}\label{eq:mQAM_distance_ratio}
    \eta  \lessapprox  \sqrt{\frac{2}{3}} \approx 0.8165.
\end{eqnarray}
Thus, the minimum distance of the MBM constellation is less than that of the corresponding M-QAM constellation, particularly for large $k$.
In conclusion, while the open-loop MBM provides flexibility and (potentially) higher spectral efficiency, it comes with a trade-off in minimum distance compared to conventional M-QAM constellations.

\section{Closed-Loop MBM}\label{sec:Close_loop MBM}
In the previous section, we outlined that the baseline open-loop MBM creates signal constellations with small minimum distances. To address this fundamental issue, we consider a closed-loop MBM, where feedback is used to modify the signal before it reaches the RF mirrors. Fig.~\ref{Fig:closed_loop_MBM} depicts the closed-loop MBM transmitter. The transmitter employs \emph{complex weights} to modify the transmitted signal.

For each channel state from the set $\{s_1, s_2, \ldots, s_{2^k}\}$, the transmitter generates a corresponding set of complex weights given by $\{w_1, w_2, \ldots, w_{2^k}\}$. Thus, the set of closed-loop signal constellation points becomes
\begin{eqnarray}\label{eq:MBM_const}
\mathbb{S}^{\texttt{(cl)}} := \big\{w_1 h_1, w_2 h_2, \ldots, w_{2^k} h_{2^k}\big\}.
\end{eqnarray}
The weights are subject to power constraints. For a unit-power waveform just before weight multiplication, the total transmit power is given by
\begin{eqnarray}\label{eq:CL_MBM_power1}
   \left|w_1 \right|^2 p(s=s_1) + \cdots + \left|w_{2^k} \right|^2 p(s=s_{2^k}).
\end{eqnarray}
Assuming that the input information bits are i.i.d., this creates i.i.d. state changes at the transmitter, such as turning on or off the PIN diodes of the RF mirrors. That is,
\begin{eqnarray}\label{eq:state_probability}
 p(s=s_i) = 2^{-k} \ \text{for} \ i=1,2,\ldots,2^k.
\end{eqnarray}
Since each state is active only once per channel use, the total transmit power for all $2^k$ states must satisfy the constraint
\begin{eqnarray}\label{eq:CL_MBM_power2}
   \left|w_1 \right|^2 + \cdots + \left|w_{2^k} \right|^2 \leq 2^k.
\end{eqnarray}

The power constraint in \eqref{eq:CL_MBM_power2} for the optimal choice should be satisfied by equality. To see this, consider a case where the weights are chosen such that
\begin{eqnarray}\label{eq:CL_MBM_power3}
   \left|w_1 \right|^2 + \cdots + \left|w_{2^k} \right|^2 < 2^k.
\end{eqnarray}
We can then find $\alpha > 1$ such that the new weights satisfy
\begin{eqnarray}\label{eq:CL_MBM_power4}
   \left|\overline{w}_1 \right|^2 + \cdots + \left|\overline{w}_{2^k} \right|^2 = 2^k,
\end{eqnarray}
where $\overline{w}_i = \alpha w_i$ for $1 \leq i \leq 2^k$. The minimum distance of the new constellation is then given by
\begin{eqnarray}\label{eq:MBM_const}
 \overline{d}_{ij} &=& \left|\overline{w}_i h_i - \overline{w}_j h_j \right| = \alpha \left|{w}_i h_i - {w}_j h_j \right| \nonumber\\
  &>& \left|{w}_i h_i - {w}_j h_j \right| = {d}_{ij}.
\end{eqnarray}
Thus, the new constellation has better distance properties since the distance between all pairs is increased.

Therefore, the closed-loop $2^k$-MBM optimization problem can be formulated as
\begin{eqnarray}\label{eq:BMBM}
  && \max \min_{1 \leq i \neq j \leq 2^k} \left|{w}_i h_i - {w}_j h_j \right| \\
 &&  \text{s.t.} \ \ \sum\nolimits_{i=1}^{2^k} \left|w_i \right|^2  = 2^k.
\end{eqnarray}

In the next section, we will discuss a method for solving the formulated MBM optimization problem.

\begin{figure}
\centering
\vspace{-0.0cm}
	\includegraphics[width=.4\textwidth]{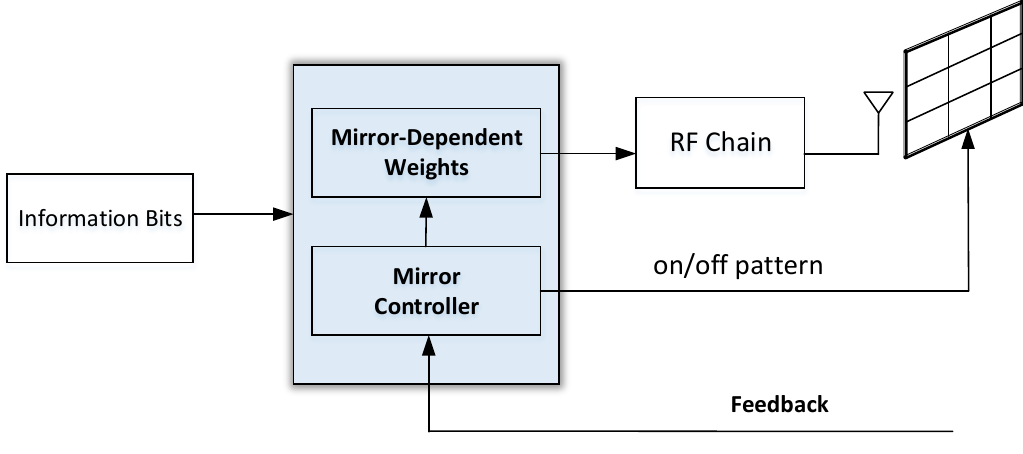}  
\vspace{-0.2cm}
	\caption{Closed-loop MBM transmitter with feedback circuitry.} 
	\label{Fig:closed_loop_MBM}
	\vspace{-0.5cm}
\end{figure}

\begin{figure}
\centering
\vspace{-0.8cm}
	\includegraphics[width=.6\textwidth]{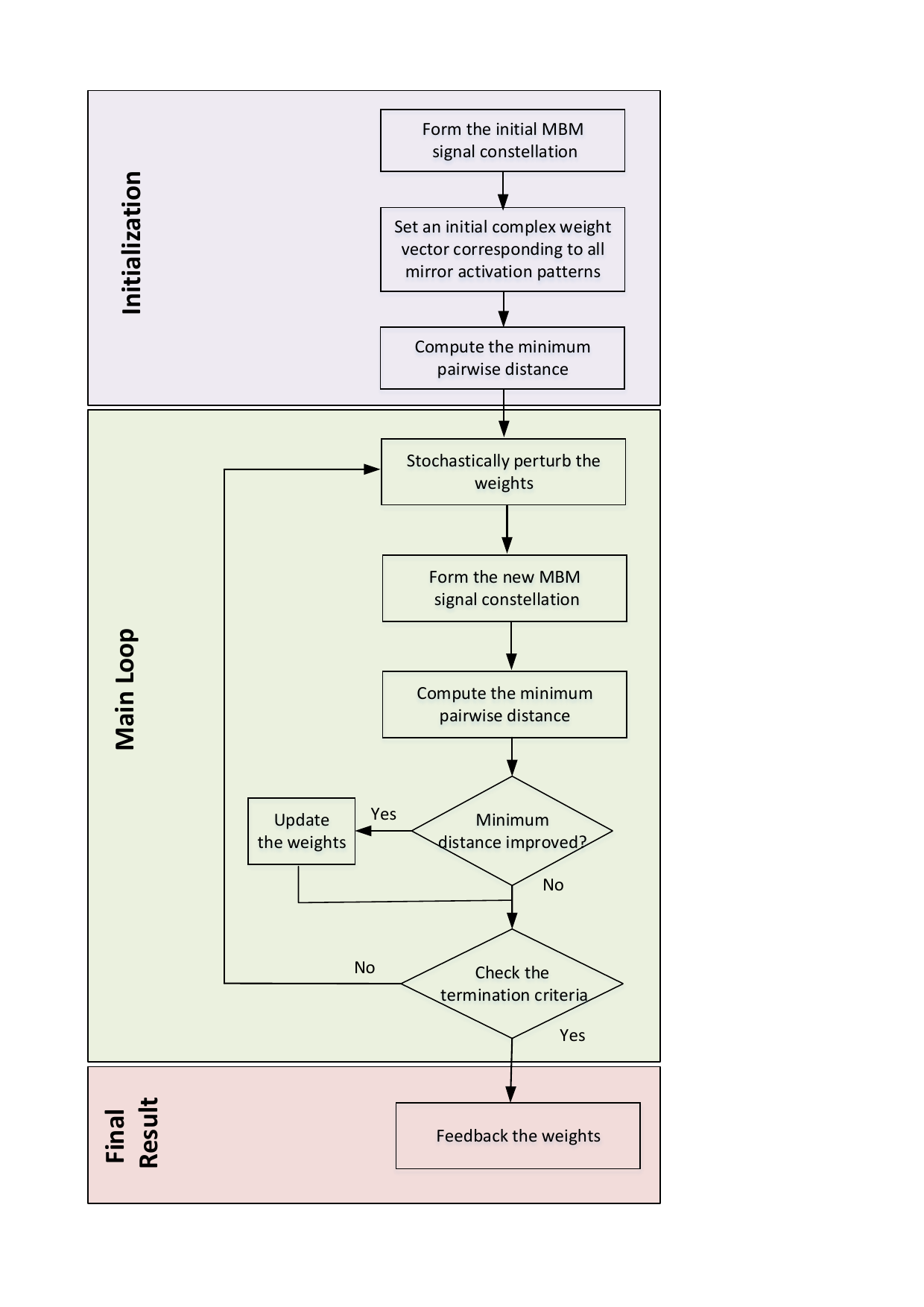}  
\vspace{-1.6cm}
	\caption{Stochastic algorithm for computing optimized weights.}
	\label{Fig:purtebation_algo}
	\vspace{-0.3cm}
\end{figure}

\section{Weight Optimization Algorithm}\label{sec:weight_algo}

Fig.~\ref{Fig:purtebation_algo} illustrates a flowchart of an algorithm designed to compute the complex weights. This algorithm uses an iterative stochastic perturbation method based on a given design metric. It starts with the initial signal space formed by the set $\mathbb{S}:=\left\{h_1, h_2, \ldots, h_{2^k}\right\}$ as its input and produces a set of complex weights $\{w_1, w_2, \ldots, w_{2^k}\}$ as its output, subject to a power constraint. The design metric can include the minimum pairwise distance (e.g., Euclidean distance or any other related distance measure).

To illustrate, consider the minimum pairwise Euclidean distance as the objective function. The algorithm begins with unit weights $w_i^{(0)} = 1$ for all $i=1,2,\ldots,2^k$ (i.e. baseline open-loop configuration) and computes the objective function
\begin{eqnarray}\label{eq:initial_d}
 d_{0} &=& \min_{i \neq j} \left| w_i^{(0)} h_i - w_j^{(0)} h_j \right|.
\end{eqnarray}

Subsequently, the weights are iteratively and \emph{stochastically} perturbed according to
\begin{eqnarray}\label{eq:weight_change}
 w_i^{(l)} &=& \alpha_l \left( w_i^{(l-1)} + \Delta_i^{(l)} \right), \quad \text{for } l=1,2,\ldots
\end{eqnarray}
where $\Delta_i^{(l)}$ denotes the stochastic perturbation variable, selected according to a given random distribution, and $\alpha_l$ is the power normalization factor. At stage $l$, the perturbation $\Delta_i^{(l)}$ is applied to the $i$th weight found in the previous stage $w_i^{(l-1)}$, and the objective function is recomputed:
\begin{eqnarray}\label{eq:perturbed_d}
 d_{l} &=& \min_{i \neq j} \left| w_i^{(l)} h_i - w_j^{(l)} h_j \right|.
\end{eqnarray}

One distribution we use involves first \emph{uniformly} selecting an index from the integer set $\{1,2,\ldots,2^k\}$, and then applying a \emph{uniform circular distribution} to choose the perturbation value $\Delta_i^{(l+1)}$. The support of the uniform distribution may decrease over time, resulting in smaller perturbation values as the algorithm progresses. For unselected indices, no perturbation is applied. The perturbation is accepted if it results in a constellation with a larger minimum distance (i.e., $d_{l} > d_{l-1}$). Otherwise, a new random perturbation is chosen until the design metric is improved. This process continues until there is no significant change in the metric or the maximum number of trials has been reached.

Once the complex weights $w_i$ for $i=1,2,\ldots,2^k$ are determined, these weights (or a signal indicating these weights) are fed back to the transmitter. The transmitter uses these weights to configure its signal. Consequently, the new signal constellation points observed by the receiver are represented by the set $\{\mathtt{w}_i^{\text{(opt)}} h_i\}_{i=1}^{2^k}$, where $\mathtt{W}^{\text{(opt)}}:=\{\mathtt{w}_1^{\text{(opt)}}, \mathtt{w}_2^{\text{(opt)}}, \ldots, \mathtt{w}_{2^k}^{\text{(opt)}}\}$ is the set of optimized complex weights used at the transmitter.

\begin{figure}[t]
\centering
	\includegraphics[width=.45\textwidth]{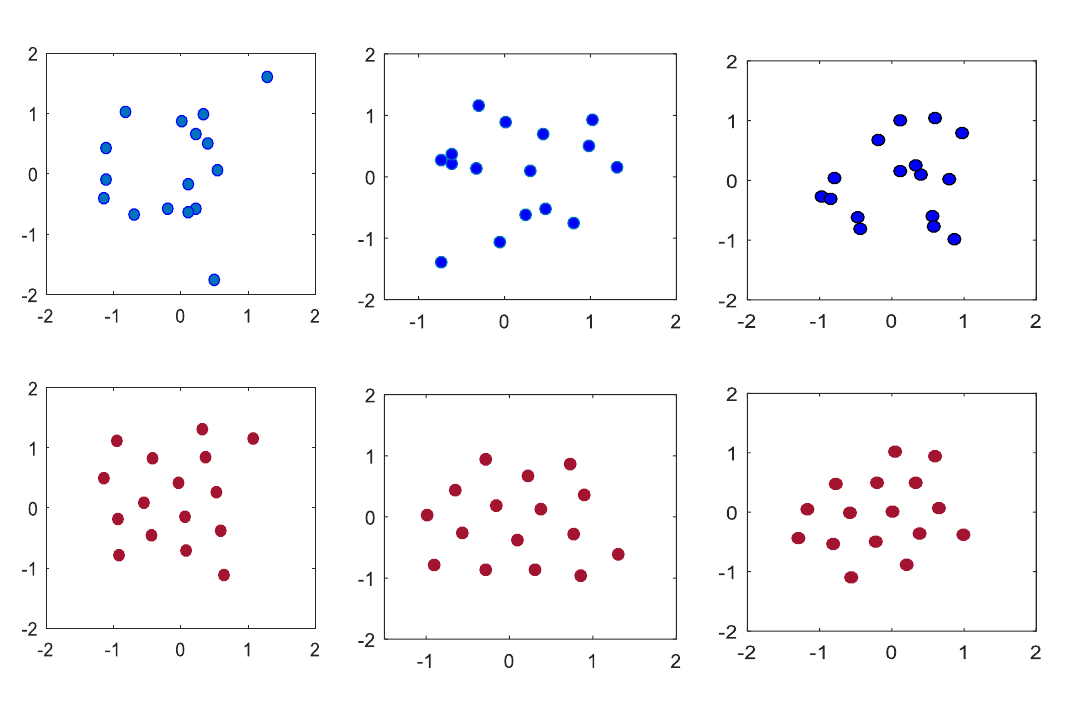}  
\vspace{-0.4cm}
	\caption{Examples of 16-MBM signal point constellations showing both \textcolor{MyRoyalBlue}{\textbf{open-loop}} design and  the corresponding \textcolor{MyFireBrick}{\textbf{optimized closed-loop}} after the configuration of the optimized weights. }
	\label{Fig:Const_16MBM}
	\vspace{-0.4cm}
\end{figure}

\subsection{Exemplary Optimized MBM Constellations}\label{sec:opt_const}

Fig.~\ref{Fig:Const_16MBM} shows the two-dimensional plot of three examples of 16-MBM signal constellations (i.e., RF mirrors with $k=4$ PIN diodes, creating $M=16$ states). The open-loop constellations are generated as described in prior work \cite{Basar} under Rayleigh fading, where the states are activated independently and identically at the transmitter. The top subplots, depicted in blue, show the open-loop signal constellations, while the lower subplots, depicted in red, illustrate the closed-loop signal constellations with optimized weights.


The open-loop constellations are characterized by \emph{low} minimum pairwise distances and irregular shapes, leading to \emph{inefficient} use of the signal space. In contrast, the closed-loop constellations with optimized weights achieve a much larger minimum pairwise distance by adopting shapes similar to \emph{hexagonal} grids, which offer  \emph{optimal packing} of the signal points.  Thus, the algorithm’s ability to enhance constellation regularity from \emph{any} starting  random open-loop state effectively \emph{diminishes} fading effects, leading to closed-loop MBM performance that closely resembles \emph{ideal} AWGN conditions. This results in a \emph{significant} performance improvement.

\begin{figure}[t]
\vspace{-0.4cm}
\centering
	\includegraphics[width=.42\textwidth]{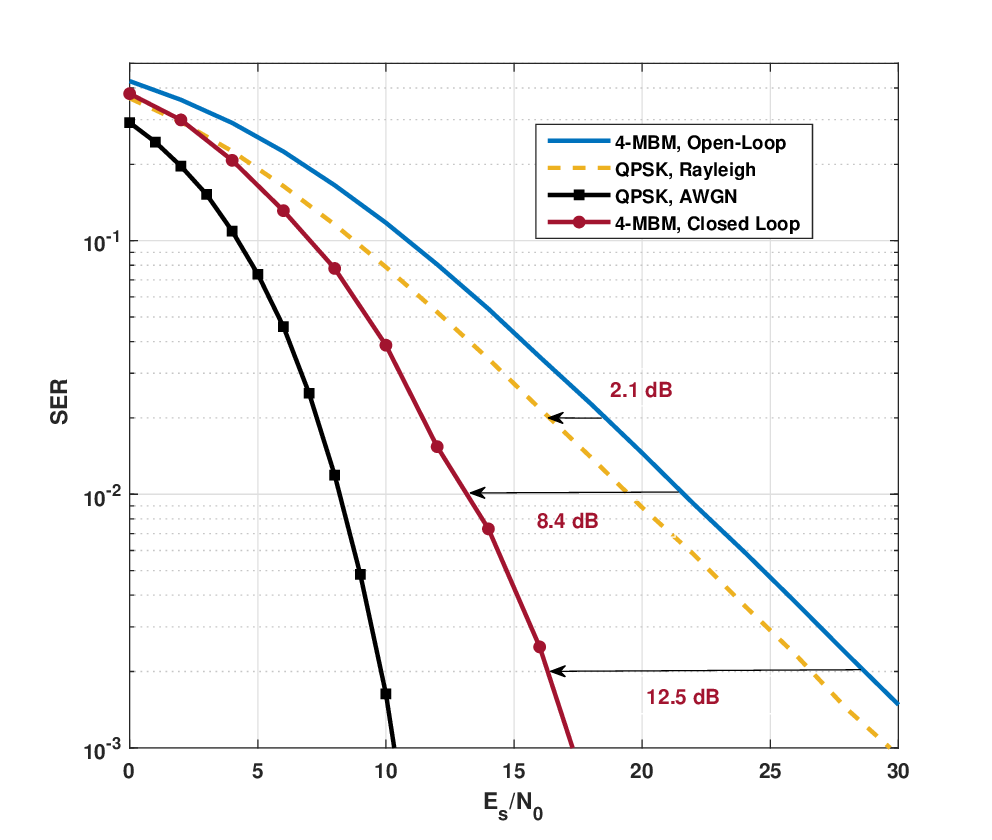}  
\vspace{-0.2cm}
	\caption{SER comparison of 4-MBM with QPSK.}
	\label{Fig:SER_QPSK}
	\vspace{-0.0cm}
\end{figure}

%

\begin{figure}
\centering
\vspace{-0.0cm}
	\includegraphics[width=.42\textwidth]{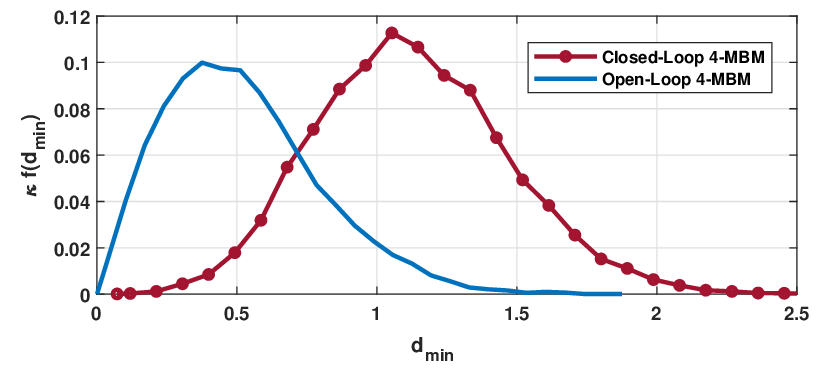}  
\vspace{-0.2cm}
	\caption{Scaled pdf of minimum distance of  MBM.}
	\label{Fig:Distance_QPSK}
	\vspace{-0.4cm}
\end{figure}

\section{Performance Evaluations}\label{sec:performance}

In this section, we discuss the performance of both uncoded and coded MBM transmission schemes for open- and closed-loop configurations and benchmark them against M-QAM performance under Rayleigh fading and AWGN channels.

Fig.~\ref{Fig:SER_QPSK} shows the Symbol Error Rate (SER) for 4-MBM and QPSK schemes. It is observed that the open-loop performs worse than QPSK over Rayleigh fading channels. This power loss, predicted in Section~\ref{sec:open_MBM_properties}, is numerically observed to be about 2 dB. In contrast, the closed-loop with optimized weights performs \emph{significantly} better than that of  the open-loop counterpart, with the gain amounting to several dBs. This notable improvement of the closed-loop 4-MBM over the open-loop can be attributed to the enhanced minimum distance of the signal constellation. Fig.~\ref{Fig:Distance_QPSK} plots the scaled probability density functions (pdfs) of the minimum distance of the constellation for both cases. It is evident that the distribution of $d_{min}$ for the closed-loop is shifted to the right, resulting in a much better minimum distance profile.


Fig.~\ref{Fig:SER_16QAM} illustrates the SER for 16-QAM and both open-loop and closed-loop 16-MBM schemes. The figure shows that the performance degradation of the open-loop system is mitigated in higher-order MBM, as anticipated in Section~\ref{sec:open_MBM_properties}, with a reduction to less than 1 dB for 16-MBM. In \emph{stark} contrast, the closed-loop solution indicates a \emph{substantial} performance improvement over its open-loop counterpart, with its performance approaching that of 16-QAM in an AWGN channel.

\begin{figure}[t]
\vspace{-0.4cm}
\centering
	\includegraphics[width=.42\textwidth]{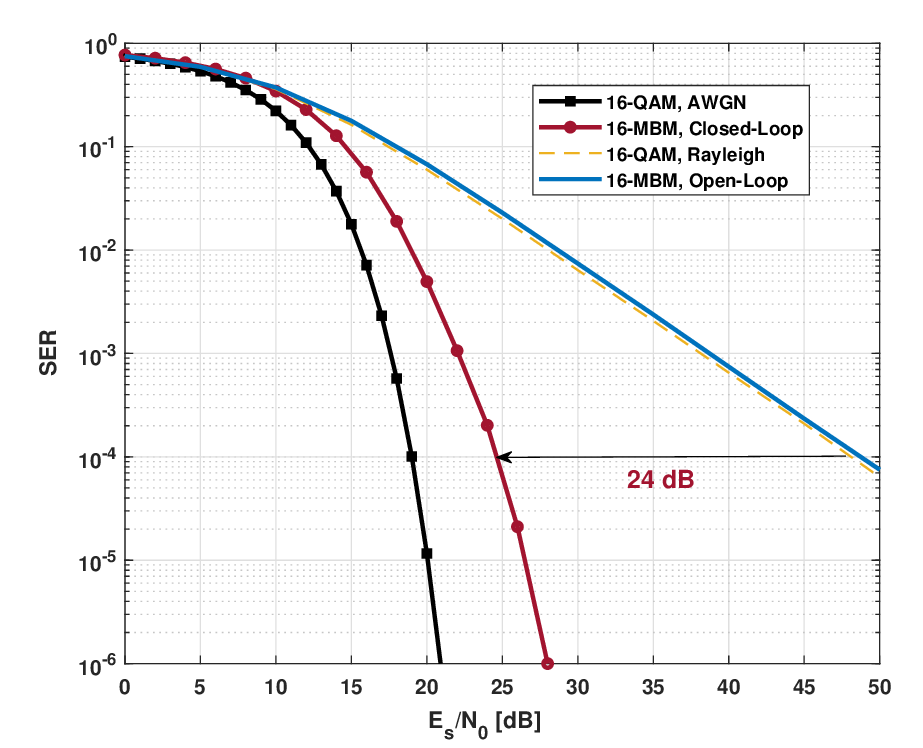}  
\vspace{-0.2cm}
	\caption{SER  Comparison of  16-MBM with 16-QAM.}
	\label{Fig:SER_16QAM}
	\vspace{-0.4cm}
\end{figure}

\begin{figure}[t]
\centering
\vspace{-0.01cm}
	\includegraphics[width=.44\textwidth]{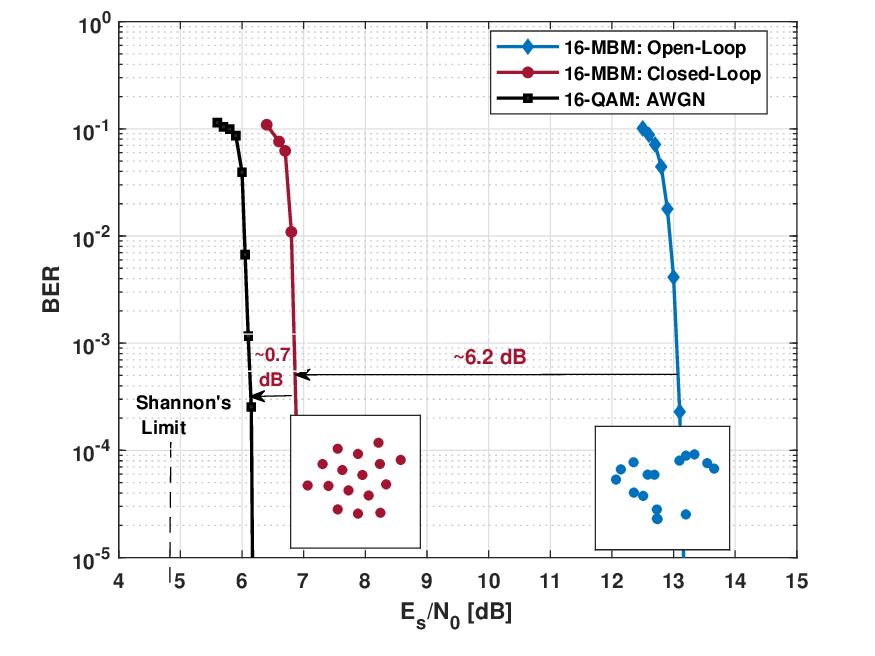}  
\vspace{-0.3cm}
	\caption{BER comparison of  16-MBM with 16-QAM.}
	\label{Fig:BER_16MBM}
	\vspace{-0.2cm}
\end{figure}

Fig.~\ref{Fig:BER_16MBM} depicts the Bit Error Rate (BER) of 16-MBM and 16-QAM over AWGN, where the information bits are rate-half LDPC-coded prior to modulation. For MBM, an optimized symbol-to-bit mapping is used to ensure that the closest points have the minimum Hamming distance. The same stochastic algorithm used in Fig.~\ref{Fig:purtebation_algo} is applied, with the metric changed to Hamming distance and the perturbation implemented through uniform random bit flipping. It is observed that the closed-loop MBM outperforms the open-loop system, achieving performance close to that of AWGN. This improvement is due to the enhanced minimum distance profile achieved through constellation shaping in the closed-loop MBM.


\section{Conclusions}\label{sec:concl}

We introduced a feedback-driven scheme using RF mirrors, where the MBM signal constellation is carefully optimized for desirable features, such as a larger minimum pairwise distance within the signal space. We employed a simple numerical algorithm based on stochastic perturbation and demonstrated a \emph{significant} improvement in performance. Future work may include extending the proposed design with machine learning techniques to potentially enhance and generalize the results.



\end{document}